\documentclass[envcountsame]{article}

\usepackage{url}
\usepackage{amsmath,amsthm} 
\usepackage{amssymb,amsmath}
\usepackage{tikz}
\usetikzlibrary{fit}
\usepackage[english]{babel}
\usepackage[colorlinks=true,citecolor=black,linkcolor=black]{hyperref}
\usepackage{a4wide}
\usepackage[color=green!40]{todonotes}
\usepackage{xspace} 
\usepackage{fancybox}			
\usepackage{mathtools}	
\usepackage{nicefrac}
\usepackage{booktabs}



\newtheorem{theorem}{Theorem} 
\newtheorem{proposition}[theorem]{Proposition}%
\newtheorem{corollary}[theorem]{Corollary}
\newtheorem{definition}[theorem]{Definition}%

\newcommand{\np}{\mathsf{NP}}
\newcommand{\wone}{\mathsf{W[1]}}
\newcommand{\wtwo}{\mathsf{W[2]}}
\newcommand{\fpt}{\mathsf{FPT}}
\newcommand{\xp}{\mathsf{XP}}

\newcommand{\nat}{\mathbb{N}}
\newcommand{\p}{\mathsf{P}}

\newcommand{\R}{\mathcal{R}}

\renewcommand\leq\leqslant
\renewcommand\geq\geqslant



\def\mcis{\textsc{Maximum Common Induced Subgraph}\xspace}
\def\mccis{\textsc{Maximum Common Connected Induced Subgraph}\xspace}
\def\isi{\textsc{Induced Subgraph Isomorphism}\xspace}
\def\icsi{\textsc{Induced Connected Subgraph Isomorphism}\xspace}

\def\ISIC{\textsc{I(C)SI}\xspace}

\def\clique{\textsc{Clique}\xspace}

\tikzstyle{bigvertex}=[circle,fill=black!0,minimum size=15pt,inner sep=0pt]
\tikzstyle{rect}=[rectangle, rounded corners]
\tikzstyle{edge} = [draw,-,rounded corners=8pt]
\tikzstyle{vertex}=[circle, draw, inner sep=0pt, minimum width=15pt]
\usetikzlibrary{trees,decorations,decorations.pathmorphing,decorations.pathreplacing}
\usetikzlibrary{shapes}

\newcommand{\PbOpt}[3]{%
\begin{center}
  \begin{tabular}{|l|}%
  \hline
    \begin{minipage}[c]{.95\textwidth}
    \smallskip%
      \par\noindent%
      \textsc{#1:}
      \par\noindent%
      $\bullet$
      \textbf{\textsf{Input}}: #2%
      \par\noindent%
      $\bullet$
      \textbf{\textsf{Output}}: #3%
      \smallskip%
      \par\noindent%
    \end{minipage}
    \\\hline
  \end{tabular}%
\end{center}
}%

\begin{document}

\title{On the Complexity of Various Parameterizations of Common Induced Subgraph Isomorphism\footnote{An extended abstract of this work appears in~\cite{DBLP:conf/iwoca/Abu-KhzamBS14}}
}


\author{Faisal N. Abu-Khzam\footnote{\noindent Lebanese American University, Beirut, Lebanon \tt{faisal.abukhzam@lau.edu.lb}} \and \'Edouard Bonnet\footnote{ Middlesex University, Department of Computer Science, Hendon, London, UK \tt{edouard.bonnet@lamsade.dauphine.fr}} \and Florian Sikora\footnote{Universit\'{e} Paris-Dauphine, PSL Research University, CNRS, LAMSADE, Paris, France,  \tt{florian.sikora@dauphine.fr}}
}

\date{}

\maketitle
\begin{abstract}
In the \mcis problem (henceforth MCIS), given two graphs $G_1$ and $G_2$, one looks for a graph with the maximum number of vertices being both an induced subgraph of $G_1$ and $G_2$.
MCIS is among the most studied classical $\np$-hard problems. 
It remains $\np$-hard on many graph classes including forests. 
In this paper, we study the parameterized complexity of MCIS.
As a generalization of \textsc{Clique}, it is $\wone$-hard parameterized by the size of the solution.
Being NP-hard even on forests, most structural parameterizations are intractable.
One has to go as far as parameterizing by the size of the minimum vertex cover to get some tractability.
Indeed, when parameterized by $k := \text{vc}(G_1)+\text{vc}(G_2)$ the sum of the vertex cover number of the two input graphs, the problem was shown to be fixed-parameter tractable, with an algorithm running in time $2^{O(k \log k)}$.
We complement this result by showing that, unless the ETH fails, it cannot be solved in time $2^{o(k \log k)}$.
This kind of tight lower bound has been shown for a few problems and parameters but, to the best of our knowledge, not for the vertex cover number.
We also show that MCIS does not have a polynomial kernel when parameterized by $k$, unless $\np \subseteq \mathsf{coNP}/poly$. 
Finally, we study MCIS and its connected variant MCCIS on some special graph classes and with respect to other structural parameters.
%

\end{abstract}


\section{Introduction}
\label{intro}

A common induced subgraph of two graphs $G_1$ and $G_2$ is a graph that is isomorphic to an induced subgraph of both graphs.
The problem of finding a common induced subgraph with the maximum number of vertices (or edges) has many applications in a number of domains including bioinformatics and chemistry \cite{Grindley1993,Koch1996,MW81,RW02,Yamaguchi2004}. 
In the decision version of the problem, we are given an integer $k$ and the question is to decide whether there is a solution with at least $k$ vertices.
We say that the solution size $k$ is the \emph{natural parameter} of the 
problem. 

Concerning its classical complexity, \mcis is $\np$-complete, and remains so 
on forests. When the common subgraph is required to be connected, the problem 
is in $\p$ for trees \cite{Garey1979}.
Moreover, \mcis is also in $\p$ when the two input graphs are connected and 
(both) have bounded treewidth and bounded degree \cite{Akutsu1993}. 

A particular case of \mcis is the well known \isi (ISI) decision problem, where the question posed is whether $G_1$ is isomorphic to an induced 
subgraph of $G_2$.
In other words, it is equivalent to \mcis where $k= |G_1|$.
In this case, $G_1$ is called the pattern graph and $G_2$ is called the host 
graph.
ISI is known to be $\np$-hard, even when $G_2$ is an interval graph and $G_1$ 
is a proper interval graph, but it becomes polynomial-time solvable when $G_1$ 
is in addition connected~\cite{Heggernes}. 
Unlike \textsc{Subgraph Isomorphism}, \isi remains $\np$-hard when both
the host graph and the pattern graph consist of a disjoint union of 
paths \cite{Damaschke1991}. 
From the parameterized complexity viewpoint, the problem is $\wone$-hard in 
general for the natural parameter, by a straightforward reduction 
from \textsc{$k$-Clique}. Therefore MCIS is also $\wone$-hard. 
Moreover, ISI (and, therefore, MCIS) remains $\wone$-hard even when both 
graphs are interval graphs~\cite{MarxS13}. 
On the other hand, ISI is $\fpt$ on nowhere dense 
graphs, being expressible by a first-order formula of length
function of the natural parameter $k$~\cite{GroheKS14}.
This generalizes what was previously known about ISI on 
$H$-minor free graphs~\cite{FG01} 
and graphs of bounded degree~\cite{Cai2006}. 
We observe that whenever ISI in $\fpt$ on a certain graph class, then so 
is MCIS. 
To see this, note that an arbitrary instance $(G_1,G_2,k)$ of MCIS can be 
reduced in fpt-time to instances of ISI by enumerating each graph $H$ on 
$k$ vertices and checking whether $H$ is an induced subgraph of $G_1$ and $G_2$.
This implies that ISI and MCIS have the same parameterized complexity with 
respect to the natural parameter.  

Another way of dealing with the hardness of a problem is to study its complexity with respect to auxiliary (or structural) parameters, to better  understand its algorithmic behavior (see for example~\cite{FellowsJR13}).
Being $\np$-hard on disjoint union of chordless paths~\cite{Damaschke1991}, 
MCIS is hard on graphs with bounded treewidth as well as graphs where the size of the minimum feedback vertex set is bounded. 
Thus the problem is paraNP-hard when parameterized by the treewidth of the input graphs, or by a bound on the sizes of their minimum feedback vertex sets.
Therefore, we need to look for ``bigger'' parameters.
And indeed, if the parameter $k$ is a bound on the sizes of the minimum vertex 
covers of the input graphs, 
then the problem is in $\fpt$, with a running time of $O((24k)^k)=2^{O(k \log k)}$~\cite{AbuKhzam2014}.
In this paper, we show that this algorithm cannot be significantly improved: 
unless the Exponential Time Hypothesis 
(ETH) fails, there is no algorithm solving MCIS in time $O^*(2^{o(k \log k)})$, where the $O^*$ notation suppresses the polynomial factors.
We 
also prove 
that MCIS does not have a polynomial-size kernel in this case unless $\np \subseteq \mathsf{coNP}/poly$. 
These two latter results answer open problems raised in \cite{AbuKhzam2014}.
Finally, we show that \mccis (MCCIS), where the solution should be a connected 
graph, is also fixed-parameter tractable when parameterized by $k := vc(G_1)+vc(G_2)$.

\section{Preliminaries}
\label{prelims}

Two finite graphs $G_1=(V_1,E_1)$ and $G_2=(V_2,E_2)$ are \textit{isomorphic} 
if there 
is a bijection $\pi: V_1 \to V_2$ such that $\forall u,v \in V_1: 
uv \in E_1 \Leftrightarrow \pi(u)\pi(v) \in E_2$. Given a graph $G=(V,E)$, a graph  
$G'=(V',E')$ is an \textit{induced subgraph} of $G$ if $V' \subseteq V$ and 
$E' = E(V')=\{uv \in E $ $|$ $ u, v \in V' \}$, i.e. $E'$ is the edge set with both 
extremities in $V'$. We also say that $G'$ is the subgraph of $G$ induced by $V'$. 

The \textit{girth} of a graph $G$ is the length of the shortest cycle contained in $G$. 
Contracting an edge $uv$ consists of deleting $uv$ and replacing the vertices $u$ and $v$ by a single vertex $w$ in the incidence relation (edges incident on $u$ or $v$ become incident on $w$).
A graph $H$ is a \textit{minor} of graph $G$ if $H$ is obtained from a subgraph of $G$ by applying zero or more edge contractions. Given a fixed graph $H$, a family $\cal{F}$ of graphs is said to be \textit{$H$-minor free} if $H$ is not a minor of any element of $\cal{F}$.

The \mcis problem is defined formally as follows. 

\PbOpt{\mcis (MCIS)}{Two graphs $G_1=(V_1,E_1)$ and $G_2=(V_2,E_2)$.}{An induced 
subgraph $G_1'$ of $G_1$ isomorphic to an induced subgraph $G_2'$ of $G_2$ with a 
maximum number of vertices.}



\mccis (MCCIS) is defined as MCIS with the additional restriction that the solution must be \emph{connected}. 

For completeness, we also give the definition of \isi:

\PbOpt{\isi (ISI)}{Two graphs $G_1=(V_1,E_1)$ and $G_2=(V_2,E_2)$.}{An induced 
subgraph $G'_2$ of $G_2$ isomorphic to $G_1$ if it exists.}

\icsi (ICSI) is defined as ISI but $G_1$ must be connected. 

\paragraph{Parameterized complexity}
A parameterized problem $(I,k)$ is \textit{fixed-parameter tractable} (or in 
the class $\fpt$) with respect to parameter $k$ if it can be solved in 
$f(k)\cdot|I|^c$ time (i.e. in fpt-time), where 
$f$ is any computable function and $c$ is a constant (see 
\cite{Downey2013,Niedermeier2006} 
for more details about fixed-parameter tractability).
The parameterized complexity hierarchy is composed of the classes 
$\fpt \subseteq \wone 
\subseteq \wtwo \subseteq \dots \subseteq \mathsf{XP}$. 
The class $\xp$ contains problems solvable in time $f(k)\cdot |I|^{g(k)}$, 
where $f$ and $g$ are unrestricted functions.
A problem is said to be \emph{paraNP-hard} if it is $\np$-hard even for a 
constant value of the parameter (it hence cannot be in $\mathsf{XP}$).
A $\wone$-hard problem is not fixed-parameter tractable, unless $\fpt=\wone$, and one can prove $\wone$-hardness by means of a \emph{parameterized reduction} from a $\wone$-hard problem.
This is a mapping of an instance~$(I,k)$ of a problem~$A_1$ in $g(k)\cdot |I|^{O(1)}$ time (for any computable function~$g$) into an instance $(I',k')$ for~$A_2$ such that $(I,k)\in A_1\Leftrightarrow (I',k')\in A_2$ and $k'\le h(k)$ for some function~$h$.

A powerful technique to design parameterized algorithms is \textit{kernelization}.
In short, kernelization is a polynomial-time self-reduction algorithm that takes an instance $(I,k)$ of a parameterized problem $P$ as input and computes an equivalent instance $(I',k')$ of $P$ such that $|I'| \leqslant h(k)$ for some computable function $h$ and $k' \leqslant k$.
The instance $(I',k')$ is called a \textit{kernel} in this case.
If the function $h$ is polynomial, we say that $(I',k')$ is a polynomial kernel.
It is well known that a problem is in $\fpt$ iff it has a kernel, but this equivalence yields super-polynomial kernels (in general).
To design efficient parameterized algorithms, a kernel of polynomial (or even linear) size in $k$ is important.
However, some lower bounds on the size of the kernel can be shown unless some polynomial hierarchy collapses.
To show this result, we will use the cross composition technique developed by Bodlaender et al.~\cite{Bodlaender2014}.

\begin{definition}[Polynomial equivalence relation \cite{Bodlaender2014}]
  An equivalence relation $\R$ on $\Sigma^*$ is said to be \emph{polynomial} if the following two conditions hold:
  (i) There is an algorithm that given two strings $x, y \in \Sigma^*$ decides whether $x$ and $y$ belong to the same equivalence class in time $(|x| + |y|)^{O(1)}$.
  (ii) For any finite set $S \subseteq \Sigma^*$ the equivalence relation $\R$ partitions the elements of $S$ into at most $(\max_{x \in S} |x|)^{O(1)}$ classes.

\end{definition}

\begin{definition}[OR-cross-composition \cite{Bodlaender2014}] Let $L \subseteq \Sigma^*$ be 
a set and let $Q \subseteq \Sigma^* \times \mathbb{N}$ be a parameterized problem. We say 
that $L$ \emph{cross-composes} into $Q$ if there is a polynomial equivalence relation 
$\R$ and an algorithm which, given $t$ strings $x_1, x_2,  \dots, x_t$ belonging to the 
same equivalence class of $\R$, computes an instance $(x^*,k^*) \in \Sigma^* \times \mathbb{N}$ 
in time polynomial in $\sum_{i=1}^{t}|x_i|$ such that: (i) $(x^*,k^*) \in Q \Leftrightarrow x_i \in L$ for some $1 \leqslant  i \leqslant t$. (ii) $k^*$ is bounded by a polynomial in $\max_{i=1}^t |x_i| + \log t$.

\end{definition}

\begin{proposition}[\cite{Bodlaender2014}]
Let $L \subseteq \Sigma^*$ be a set which is $\np$-hard under Karp reductions. If 
$L$ cross-composes into the parameterized problem $Q$, then $Q$ has no polynomial 
kernel unless $\np \subseteq \mathsf{coNP}/poly$.
\end{proposition}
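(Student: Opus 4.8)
The plan is to reduce the statement to the \emph{distillation theorem} of Fortnow and Santhanam that underlies \cite{Bodlaender2014}: if an $\np$-hard language $L$ admits an \emph{OR-distillation} into some language $R$ --- that is, a polynomial-time algorithm mapping any tuple $(x_1,\dots,x_t)$ to a single string $y$ with $y\in R \Leftrightarrow (x_i\in L$ for some $i)$ and with $|y|$ bounded by a polynomial in $\max_i|x_i|$ \emph{alone}, independently of $t$ --- then $\np\subseteq\mathsf{coNP}/poly$. Assuming for contradiction that $Q$ has a polynomial kernelization, I would combine it with the cross-composition to manufacture exactly such an OR-distillation for $L$, and then invoke the theorem.

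First I would reduce to the case of few inputs. Writing $n:=\max_i|x_i|$, there are fewer than $2^{n+1}$ strings of length at most $n$, so if $t\geq 2^{n+1}$ the tuple $(x_1,\dots,x_t)$ must contain a repeated string, which can be deleted in polynomial time without affecting whether some $x_i$ lies in $L$. Iterating this, I may assume $t<2^{n+1}$, hence $\log t<n+1$. This normalization is what will let the final output size depend on $n$ only.

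Next I would chain the available algorithms. Using the polynomial equivalence relation $\R$ supplied by the cross-composition, partition $\{x_1,\dots,x_t\}$ into $r\leq n^{O(1)}$ classes, each contained in a single $\R$-class. Apply the cross-composition algorithm to each class $C_j$ to obtain a $Q$-instance $(x_j^*,k_j^*)$ with $k_j^*$ bounded by a polynomial in $\max_i|x_i|+\log t=O(n)$, hence $k_j^*\leq n^{O(1)}$, and with $(x_j^*,k_j^*)\in Q$ iff some string of $C_j$ belongs to $L$. Then apply the assumed polynomial kernelization of $Q$ to each $(x_j^*,k_j^*)$, yielding an equivalent instance of size at most $(k_j^*)^{O(1)}\leq n^{O(1)}$; all of this runs in time polynomial in $\sum_i|x_i|$. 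Finally, take $y$ to be a straightforward encoding of the resulting $r\leq n^{O(1)}$ kernels, each of size $\leq n^{O(1)}$, so $|y|\leq n^{O(1)}$ regardless of $t$; letting $R$ be the language of such encodings in which at least one listed instance lies in $Q$, we get $y\in R$ iff $x_i\in L$ for some $i$, i.e.\ an OR-distillation of $L$ into $R$. Since $L$ is $\np$-hard, the distillation theorem then forces $\np\subseteq\mathsf{coNP}/poly$, contradicting the hypothesis and hence ruling out the polynomial kernel.

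The main obstacle is the size bookkeeping: cross-composition is permitted to inflate the parameter by a polynomial in $\max_i|x_i|+\log t$, and an unbounded $\log t$ would destroy the distillation; the deduplication step is precisely the device that tames it. A secondary point to get right is why one must split into $\R$-classes before applying cross-composition (which is only guaranteed to work within one class) and why this is harmless --- namely that the number of classes is polynomial in $n$, so the concatenated output stays polynomial in $n$.
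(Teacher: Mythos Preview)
The paper does not prove this proposition at all: it is quoted verbatim as a tool from \cite{Bodlaender2014} and is used as a black box in the proof of Theorem~\ref{th:no poly kernel}. So there is no ``paper's own proof'' to compare against.

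That said, your sketch is a faithful rendering of the argument in \cite{Bodlaender2014}. The three ingredients --- deduplicating so that $\log t = O(n)$, splitting along the polynomial equivalence relation $\R$ before cross-composing, and then kernelizing each of the $n^{O(1)}$ resulting $Q$-instances down to size $n^{O(1)}$ --- are exactly the steps that convert a hypothetical polynomial kernel for $Q$ into an OR-distillation of $L$, after which the Fortnow--Santhanam lemma finishes the job. Your identification of the two delicate points (bounding $\log t$, and splitting into $\R$-classes because the cross-composition is only promised to work within one class) is on target. One cosmetic remark: when you argue that fewer than $2^{n+1}$ strings have length at most $n$, you are implicitly taking $|\Sigma|=2$; this is harmless but worth stating.
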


The \emph{Exponential Time Hypothesis} (ETH) is a conjecture by Impagliazzo et al. asserting that there is no $2^{o(n)}$-time algorithm for \textsc{3-SAT} on instances with $n$ variables \cite{ImpagliazzoETH}. 
The ETH, together with the sparsification lemma \cite{ImpagliazzoETH}, even implies that there is no $2^{o(n+m)}$-time algorithm solving \textsc{3-SAT}.
Many algorithmic lower bounds have been proved under the ETH, see for example~\cite{LokshtanovSODA}.

We say that a parameterized problem is {\em fixed-parameter enumerable} if all feasible solutions can be enumerated in $O(f(k)|I|^c)$ time, where $f$ is a computable function of the parameter $k$ only, and $c$ is a constant.

\section{Parameterized Complexity with respect to the natural parameter}\label{sec:nat}

We study the parameterized complexity of \isi, \mcis, \icsi, and \mccis with 
respect to the natural parameter. 
We will in particular study these problems in 
graphs of bounded degeneracy, chordal graphs, and graphs of large girth.


\begin{theorem}\label{thm:complete}
MCIS, MCCIS, ISI, and ICSI are $\wone$-complete.
\end{theorem}

\begin{proof}
Since those problems are $\wone$-hard by a straightforward reduction from \textsc{$k$-Clique}, it suffices to show membership in $\wone$.
In \cite{Cesati03}, it is shown that if a problem can be reduced in FPT time 
to simulating a non-deterministic single-taped Turing machine halting in at 
most $f(k)$ steps, for some function $f$, then it is in $\wone$.
The Turing machine can have an alphabet and a set of states of size depending 
on the size of the input of the initial problem.
In our case, we can design a Turing machine that guesses in $2k$ steps the 
corresponding right $k$ vertices in $G_1$ (for I(C)SI this part is not 
necessary) and the right $k$ vertices in $G_2$ (our alphabet being isomorphic 
to an indexing of $V(G_1) \cup V(G_2)$) and then check in time $O(k^2)$ whether 
the two induced subgraphs are isomorphic (and that they are connected for ICSI 
and MCCIS).
\end{proof}

%
%

In \cite{MS09} it was shown that 
\textsc{Maximum Induced Matching}\footnote{where one looks for a largest 
subset of vertices that induce a disjoint union of edges} is $\wone$-hard 
on bipartite graphs. This implies that MCIS is $\wone$-hard on bipartite 
graphs. In fact, we show that MCIS remains $\wone$-hard on more restricted 
graph classes, namely $C_4$-free bipartite graphs with degeneracy $2$.
In particular, those graphs have girth at least $6$.
This result tells us two things about MC(C)IS.
The first is that the fixed-parameter algorithm of Cai et al.~\cite[Theorem 1]{Cai2006} 
cannot be extended from bounded degree to bounded degeneracy (note that some W-hard problems on general graphs become $\fpt$ on graphs with bounded degeneracy, such as the W[2]-complete \textsc{Dominating Set} problem~\cite{AlonG09}).
The second is that short cycles are not making MC(C)IS W[1]-hard; they are W[1]-hard even without them.
In \cite{RamanS08}, the authors present fixed-parameter algorithms on graphs 
of girth 5, for some problems which are W-hard on general graphs.
MCIS and MCCIS are also resistant to this approach.

\begin{theorem}
\isi and \icsi are $\wone$-complete even when both graphs are $C_4$-free 
bipartite graphs with degeneracy at most $2$.
\end{theorem}
\begin{proof}
The incidence graph $I(G)$ of any graph $G=(V,E)$, obtained by subdividing each edge of $G$ once, has degeneracy $2$. 
Indeed, graph $I(G)$ is the bipartite graph $(V \uplus E, F)$ where the edges of $F$ are all the $ue$ for which $u \in V$, $e \in E$, and $u$ is an endpoint of $e$. 
All the vertices $e \in E$ of $I(G)$ have degree $2$.
Therefore, they can be removed first. 
Then, what is left in $I(G)$ is the independent set $V$.

We transform any input $G=(V,E), k > 3$ of \textsc{$k$-Clique}, into the instance $I(K_k),I(G)$ of \ISIC, where both graphs have degeneracy $2$.
The problem consists of finding the incidence graph of a $k$-clique within the incidence graph of $G$. 
We show that it is equivalent to finding a $k$-clique in $G$.
Obviously, if there is a $k$-clique $S$ in $G$, then the graph $I(G)[S \cup E(S)]$ is isomorphic to $I(K_k)$.
Now, let us assume that $I(K_k)$ is isomorphic to an induced subgraph of $I(G)$.
We denote by $a_1, \ldots, a_k$ the vertices of $I(K_k)$ with degree $k-1$, and by $b_1, \ldots, b_{{k \choose 2}}$ the vertices of $I(K_k)$ with degree $2$.
We denote by $\psi: V(I(K_k)) \rightarrow V(I(G))$ the isomorphism from graph $I(K_k)$ to an induced subgraph of $I(G)$.
Let $u_i=\psi(a_i)$ for each $i \in [k]$, and $v_j=\psi(b_j)$ for each $j \in [{k \choose 2}]$.
We set $S=\{u_1, \ldots, u_k, v_1, \ldots, v_{{k \choose 2}}\}$.
For every $i \in [k]$, $u_i \in V$ since the degree of $a_i$ in $I(K_k)$ is $k-1 > 2$ (hence, the degree of $u_i$ in $S$ is also $k-1>2$).
Now, for every $j \in [{k \choose 2}]$, $v_j \in E$ since $v_j$ has two neighbors in $V$ (recall that $I(G)$ is bipartite).
Therefore, $u_1, \ldots, u_k$ are $k$ vertices in $V$ inducing precisely ${k \choose 2}$ edges.
Hence, $\{u_1, \ldots, u_k\}$ is a $k$-clique in $G$.

Membership in $\wone$ comes from \autoref{thm:complete}. 
\end{proof}

\begin{corollary}
\mcis and \mccis remain $\wone$-complete on bipartite graphs of girth 6 and 
degeneracy 2.
\end{corollary}

The absence of triangles and cycles of length four in the input graphs does not make the problems tractable.
We show that the absence of a long induced cycle does not help either (in~\cite{Arumugam11}, the authors show that the $\wtwo$-hard problem \textsc{Dominator Coloring} is in $\fpt$ when the input graph is chordal).
More specifically, all four problems are $\wone$-hard on chordal graphs.
In fact, we can even show that these problems remain $\wone$-hard on a proper subclass of chordal graphs called split graphs.
A split graph is a graph whose vertex set can be partitioned into a set inducing a clique and an independent set.

\begin{theorem}
ISI (hence MCIS) and ICSI (hence MCCIS) remain $\wone$-hard on split graphs.
\end{theorem}

\begin{proof}
Similarly to the previous construction, we define $I'(G)$ as the graph 
$(V \uplus E, F)$ where the edges of $F$ are the edges $ue$ for which 
$u \in V$, $e \in E$, and $u$ is an endpoint of $e$, plus all the edges 
$uv$ with $u,v \in V$. 
The graph $I'(G)$ is split: $V$ induces a clique in $I'(G)$ and $E$ induces 
an independent set.
From an instance $G$ of \textsc{$k$-Clique} with $k > 3$, we build the 
equivalent instance $I'(K_k),I'(G)$ of MC(C)IS and I(C)SI.
The soundness can be obtained in the same way as in the previous proof.
\end{proof}

Let us now say some words about the complexity of the connected version.
First we note that MCIS is $\np$-hard on forests while MCCIS is solvable in 
polynomial-time in this case: given two forests $G_1$ and $G_2$, run the 
polynomial-time MCIS algorithm of Akutsu on every pair of trees 
from $G_1$ and $G_2$~\cite{Akutsu1992}.
From the parameterized complexity standpoint,
\mccis is $\fpt$ whenever \isi is $\fpt$ 
since the enumeration of all $O(2^{k^2})$ possible induced \emph{connected} subgraphs can be used as described in the introduction.
The converse is true on classes of graphs which are closed by adding a universal vertex (i.e., a vertex linked to all the other vertices).
An instance $(G_1,G_2,k)$ of ISI can be reduced to an equivalent instance $(G_1',G_2',k+1)$ of MCCIS by letting $G_i'$ be the graph obtained by adding a vertex to $G_i$ that is made adjacent to all other vertices of $G_i$.

\section{Structural parameterization}\label{sec:struct}

Let us first recall that $\text{tw}(G) \leqslant \text{fvs}(G) + 1 \leqslant \text{vc}(G) + 1$, where $\text{tw}(G)$ (resp. $\text{fvs}(G)$, $\text{vc}(G)$) represents the treewidth (resp. the feedback vertex set number, the vertex cover number) of $G$~\cite{Fellows2013}.
As noted before, if the parameter is the combination of $\text{tw}(G_1)$ and $\text{tw}(G_2)$ then MCIS is known to be $\wone$-hard.
Even more, if the parameter is the combination of $\text{fvs}(G_1)$ and $\text{fvs}(G_2)$ (which is bigger than the combination of the treewidth), then the problem is not even in $\xp$ since \mcis and \isi are $\np$-hard on disjoint union of chordless paths, a case where the parameter is equal to 0~\cite{Damaschke1991,Garey1979}.


\begin{theorem}[\cite{Damaschke1991,Garey1979}]
\label{notinXP}
\mcis is paraNP-hard when parameterized by $\text{fvs}(G_1)+\text{fvs}(G_2)$ (and hence by $\text{tw}(G_1)+\text{tw}(G_2)$).
\end{theorem}



One can 
extend this result to make it valid for the connected version.

\begin{theorem}\label{th:fvsone}
\icsi, and as a corollary \mccis, are 
 paraNP-hard when parameterized by $\text{fvs}(G_1)+\text{fvs}(G_2)$.
\end{theorem}

\begin{proof}
  Given an instance of \isi on forests $G_1$ and $G_2$ (each with at least 2 trees), we build an instance of \icsi by adding a universal vertex (connected to every node) in $G_1$ and in $G_2$.
  Both graph have thus a feedback vertex set of value one.
  One can see that these two universal vertices must be matched together since they are the only ones with sufficiently high degree.
  Then, there is a solution for \isi iff there is a solution for \icsi. 
The result of course holds for MCCIS, too.
\end{proof}





\medskip

It was shown in \cite{AbuKhzam2014} that MCIS is in $\fpt$ if the parameter is the combination of $\text{vc}(G_1)$ and $\text{vc}(G_2)$. Accordingly, the problem has a kernel, but no polynomial bound is known on its size. We show that, in this case,
the kernel cannot be polynomial unless $\np \subseteq \mathsf{coNP}/poly$.

\begin{theorem}\label{th:no poly kernel}
Unless $\np \subseteq \mathsf{coNP}/poly$, \mcis has no polynomial kernel when parameterized by the sum of the sizes of vertex covers in the two input graphs.
\end{theorem}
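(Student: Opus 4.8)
The plan is to apply the OR-cross-composition framework (Proposition from Bodlaender et al.) to an $\np$-hard source problem whose instances have small vertex cover, so that composing $t$ of them yields an \mcis instance whose two vertex covers stay polynomially bounded in the instance size plus $\log t$. The natural source is \mcis itself on instances where $vc(G_1)$ and $vc(G_2)$ are already small — but that is not $\np$-hard — so instead I would cross-compose from an $\np$-hard problem that is ``small-vertex-cover-friendly'' after a trivial padding trick, for instance \clique: an instance of \clique is a graph $H$ on $n$ vertices and an integer $\kappa$, and we may assume by the polynomial equivalence relation $\R$ that all $t$ input instances share the same $n$ and the same $\kappa$. Recall that ``does $H$ contain a $K_\kappa$?'' is exactly the question ``is $K_\kappa$ an induced subgraph of $\overline{H}$-complement-style encoding'' — more usefully, it is an \isi instance with pattern $K_\kappa$ (whose vertex cover is $\kappa-1$) and host $H$; and as noted in the introduction, \isi is a special case of \mcis. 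So each source instance is already essentially an \mcis instance in which one side, the pattern, has vertex cover at most $\kappa-1 \le n$.

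The core of the construction is the OR-gadget that glues the $t$ hosts $H_1,\dots,H_t$ into a single pair $(G_1^*,G_2^*)$ while keeping $vc(G_2^*)$ polynomial in $n + \log t$. First I would set $G_1^*$ to be the fixed pattern $K_\kappa$ together with a small ``selector'' gadget — a clique or bi-clique of size $O(\log t)$ whose induced subgraphs encode, in binary, the index $i \in \{1,\dots,t\}$ — attached so that any common induced subgraph of full size is forced to commit to exactly one index. On the host side, $G_2^*$ consists of $t$ disjoint copies of $H_i$, each attached to its own copy of the index-$i$ selector encoding, but crucially I would not leave the $t$ copies free: instead I would make all of them share a common vertex-cover ``spine'' of size $\mathrm{poly}(n)$ using the standard trick of partitioning each $H_i$'s vertices into $n$ bags and identifying the $j$-th bag across all copies through a small number of apex-like vertices, as is done in kernel-lower-bound constructions for vertex-cover parameters (cf.\ the technique behind Theorem~\ref{notinXP}-style gadgets). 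The target parameter is $k^* = (vc(G_1^*), vc(G_2^*))$, both of which must be verified to be $\mathrm{poly}(n) + O(\log t)$, hence polynomial in $\max_i |x_i| + \log t$ as required; and the target solution size is $\kappa$ plus the selector size.

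The correctness argument then splits as usual. For the forward direction, if some $H_i$ has a $\kappa$-clique, I would take that clique in the $i$-th copy, plus the $i$-th selector encoding, as the common induced subgraph, matching it to $K_\kappa$ plus the selector in $G_1^*$. For the backward direction, I would argue that any common induced subgraph of the target size $\kappa + (\text{selector size})$ must, because the selector gadget in $G_1^*$ is rigid and its copies in $G_2^*$ are pairwise ``non-confusable'' (they share no large induced subgraph across distinct indices, enforced by the binary encoding), map entirely inside a single copy $H_i$; restricted there it certifies a $\kappa$-clique in $H_i$. The main obstacle — the step I expect to be delicate — is the simultaneous control of \emph{both} vertex covers: it is easy to make $G_2^*$ have polynomial vertex cover by sharing a spine, but the sharing must not create \emph{spurious} common induced subgraphs that mix two different $H_i$'s (which would break the backward direction), and it must not blow up $vc(G_1^*)$ either. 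Balancing rigidity of the selector against the vertex-cover budget, and ruling out cross-copy solutions, is where the real work lies; everything else is bookkeeping over the cross-composition definition.
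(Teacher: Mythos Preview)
Your high-level plan---OR-cross-compose from \textsc{Clique} with all $t$ instances sharing the same $n$ and $\kappa$---is the same as the paper's, but the actual construction you sketch would not work, and the paper's construction avoids precisely the ``delicate step'' you flag.

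The concrete gap is the host side. Taking $t$ copies of the $H_i$'s and trying to make them ``share a spine'' cannot be made to work as stated: different $H_i$'s have different edge sets, so identifying vertices across copies either destroys the induced-subgraph semantics or creates the cross-copy solutions you are worried about. You correctly identify this as the obstacle but do not resolve it, and the $O(\log t)$ binary selector is a symptom of the same difficulty---you are paying $\log t$ because you have not found a way to make the $t$ selectors coexist for free.

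The paper's key idea, which you are missing, is to \emph{not embed the $H_i$'s at all}. Instead, $G_2$ contains a single, shared bipartite incidence scaffold: one vertex $e_{uv}$ for every potential edge $\{u,v\}$ on $[n]$, and one vertex $v_j$ for each $j\in[n]$, with $e_{uv}$ adjacent to $v_u,v_v$. The instance-specific information is encoded by $t$ selector vertices $a_1,\dots,a_t$, where $a_i$ is adjacent to exactly those $e_{uv}$ with $uv\in E(H_i)$. A small anchor (a triangle $pqr$ with $r$ adjacent to all $a_i$) forces the pattern's single selector to land on some $a_i$. The pattern $G_1$ is the analogous incidence graph of $K_\kappa$. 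The crucial point is that $\{a_1,\dots,a_t\}$ is an \emph{independent set}, so $\{p,r\}\cup\{e_{uv}:1\le u<v\le n\}$ is a vertex cover of $G_2$ of size $\binom{n}{2}+2$, with no dependence on $t$ whatsoever---no $\log t$ selector is needed. Both vertex covers are then polynomial in $n$ alone, and the correctness argument is a short rigidity check using the unique triangle.
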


\begin{proof}

We will define an OR-cross-composition from the $\np$-complete \clique, problem, where the given instance is a tuple $(G^c,l)$ and the question is whether the graph $G^c$ contains a clique on $l$ vertices. 

Given $t$ instances, $(G^c_1,l_1), (G^c_2,l_2), \dots, (G^c_t,l_t)$, of \clique, where $G^c_i$ is a graph and $l_i \in \nat, \forall 1 \leqslant i \leqslant t$, we define our equivalence relation $\R$ such that any strings that are not encoding valid instances are equivalent, and $(G^c_i,l_i), (G^c_j,l_j)$ are equivalent iff $|V(G^c_i)| = |V(G^c_j)|$, and $l_i = l_j$. Hereafter, we assume that $V(G^c_i) = \{1,\dots,n\}$ and $l_i = l$, for any $1 \leqslant i \leqslant t$. We will build an instance of \mcis parameterized by the vertex cover $(G_1,G_2,l',Z)$ where $G_1$ and $G_2$ are two graphs, $l' \in \nat$ and $Z \subseteq V(G_2)$ is a vertex cover of $G_2$ computed in fpt-time, such that there is a solution of size $l'$ for \mcis iff there is an $i, 1 \leqslant i \leqslant t$ such that there is a solution of size $l$ in $G^c_i$. We will now describe how to build $G_1$ and $G_2$.\\

To build $G_2$ (see also Figure~\ref{fig:g2}):

\begin{itemize}
\item $V(G_2) = \{p,q,r\} \cup \{a_i $ $|$ $ 1 \leqslant i \leqslant t\} \cup \{e_{uv} $ $|$ $ 1 \leqslant u < v \leqslant n\} \cup \{x_i $ $|$ $1 \leqslant i \leqslant n\}$,
\item $E(G_2)_1 = \{pq,pr,qr\}$,
\item $E(G_2)_2 = \{ra_i $ $|$ $ 1 \leqslant i \leqslant t\}$,
\item $E(G_2)_3 = \{a_ie_{uv} $ $|$ $ uv \in E(G^c_i) \}$,
\item $E(G_2)_4 = \{e_{uv}x_u, e_{uv}x_v $ $|$ $ \forall 1 \leqslant u < v \leqslant n\}$,
\item $E(G_2) = E(G_2)_1 \cup E(G_2)_2 \cup E(G_2)_3 \cup E(G_2)_4$.
\end{itemize}

\begin{figure}[ht!]
\centering
\begin{tikzpicture}[scale=0.9,auto]

\node[draw] (r) at (-1,0) {$r$};
\node[draw] (p) at (-1.7,0.5) {$p$};
\node[draw] (q) at (-0.3,0.5) {$q$};

\draw (p) -- (r) -- (q);
\draw (p) -- (q);

\node[draw] (t1) at (-3,-1) {$a_1$};
\node[draw] (t2) at (-2,-1) {$a_2$};
\node () at (-0.5,-1) {$\dots$};
\node[draw] (tt) at (1,-1) {$a_t$};
\node[draw,rectangle,rounded corners, fit = (t1) (tt)]  () {};

\node[draw] (e12) at (-4,-3) {$e_{1,2}$};
\node[draw] (e13) at (-3,-3) {$e_{1,3}$};
\node () at (-2,-3) {$\dots$};
\node[draw] (e1n) at (-1,-3) {$e_{1,n}$};
\node[draw] (e23) at (-0,-3) {$e_{2,3}$};
\node () at (1,-3) {$\dots$};
\node[draw] (en-1n) at (2,-3) {$e_{n-1,n}$};
\node[draw,rectangle,rounded corners, fit = (en-1n) (e12)]  () {};

\draw[color=black,fill=white,rounded corners] (-3.5,-5.4) rectangle (1.5,-4.6);
\node[draw] (v11) at (-2.8,-5) {$x_1$};
\node[draw] (v12) at (-1.8,-5) {$x_2$};
\node () at (-0.5,-5) {$\dots$};
\node[draw] (v1n) at (1,-5) {$x_n$};

\draw 	(t1) -- (r) -- (t2);
\draw (r) -- (tt);

\draw[color=black,fill=white,rounded corners=0.0cm] (-3.7,-2.3) rectangle (1.7,-1.7);
\node () at (-1,-2) {$a_ie_{uv} \in E(G_2) \Leftrightarrow uv \in E(G^c_i)$};

\draw (-1,-1.35) -- (-1,-1.7);
\draw (-1,-2.3) -- (-1,-2.63) ;

\draw[color=black,fill=white,rounded corners=0.0cm] (-4.3,-3.85) rectangle (2.3,-4.25);
\node () at (-1,-4.05) {$e_{uv}v_u, e_{uv}v_v \in E(G_2), \forall 1 \leqslant u < v \leqslant n$};

\draw (-1,-3.36) -- (-1,-3.85);
\draw (-1,-4.25) -- (-1,-4.6);


\end{tikzpicture}
\caption{Illustration of the construction of $G_2$.}\label{fig:g2}
\end{figure}

To build $G_1$ (see also Figure~\ref{fig:g1}):
\begin{itemize}
\item $V(G_1) = \{p,q,r,a\} \cup \{e_i $ $|$ $ 1 \leqslant i \leqslant \binom{l}{2}\} \cup \{x_i $ $|$ $ 1\leqslant i \leqslant l\}$,
\item $E(G_1)_1 = \{pq,pr,qr,ra\}$,
\item $E(G_1)_2 = \{ae_{i} $ $|$ $ 1 \leqslant i \leqslant \binom{l}{2} \}$,
\item $E(G_1)_3 = \{e_ix_u,e_ix_v $ $|$ $ \forall 1 \leqslant i \leqslant \binom{l}{2}, e_i = uv \}$,
\item $E(G_1) = E(G_1)_1 \cup E(G_1)_2 \cup E(G_1)_3$.
\end{itemize}

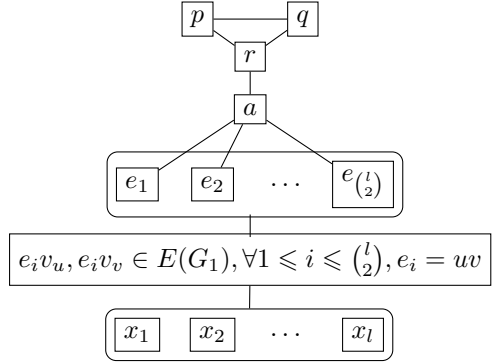
\begin{figure}[ht!]
\centering
\begin{tikzpicture}


\node[draw] (r2) at (-0.5,-1.3) {$r$};
\node[draw] (p2) at (-1.2,-0.8) {$p$};
\node[draw] (q2) at (0.2,-0.8) {$q$};

\draw (p2) -- (r2) -- (q2);
\draw (p2) -- (q2);

\node[draw] (ti) at (-0.5,-2) {$a$};

\node[draw] (e1) at (-2,-3) {$e_{1}$};
\node[draw] (e2) at (-1,-3) {$e_{2}$};
\node () at (0,-3) {$\dots$};
\node[draw] (e2l) at (1,-3) {$e_{\binom{l}{2}}$};
\node[fit=(e1)(e2)(e2l),draw,rounded corners] (e1g1) {};

\node[draw] (v1) at (-2,-5) {$x_1$};
\node[draw] (v2) at (-1,-5) {$x_2$};
\node () at (0,-5) {$\dots$};
\node[draw] (vl) at (1,-5) {$x_l$};
\node[fit=(v1)(v2)(vl),draw,rounded corners] (vg1) {};

\node[draw] (eg1) at (-0.5,-4) {$e_iv_u, e_iv_v \in E(G_1), \forall 1 \leqslant i \leqslant \binom{l}{2}, e_i=uv$};

\draw (-0.5,-3.4) -- (-0.5,-3.7);
\draw (vg1) -- (eg1);

\draw (r2) -- (ti) -- (e1);
\draw (e2) -- (ti) -- (e2l);
%



\end{tikzpicture}
\caption{Illustration of the construction of $G_1$.}\label{fig:g1}

\end{figure}

We set $l' = |V(G_1)|$, and $Z = \{p,r\} \cup \{e_{uv} | 1 \leqslant u < v \leqslant n\}$. It is easy to see that $Z$ is indeed a vertex cover for $G_2$ and that its size is equal to $\frac{n(n-1)}{2} + 2$, which is polynomial in $n$ and hence in the size of the largest instance. Note that the size of the graph $G_1$ does not depend on $t$ and is polynomial in $n$, so the size of its vertex cover is also polynomial in $n$ and independent of $t$.

Let us show that $G_1$ is an induced subgraph of $G_2$ iff at least one of the $G^c_i$'s has a clique of size $l$.

$(\Leftarrow)$ Suppose that $G^c_i$ has a clique of size $l$. We denote by $S \subseteq V(G^c_i)$ a clique of size exactly $l$ in $G^c_i$. We show that there is an induced subgraph $S'$ of $G_2$ of size $l'$, isomorphic to $G_1$. We set $V(S') = \{p,q,r\} \cup \{a_i\} \cup \{e_{uv} $ $|$ $\forall uv \in E(S)\} \cup \{x_u | u \in S\}$. One can easily check that this subgraph is isomorphic to $G_1$.

$(\Rightarrow)$ Assume now that $G_1$ is an induced subgraph of $G_2$. 
Denote by $S'$ the subgraph of $G_2$ isomorphic to $G_1$. 
Note that the only triangle in $G_2$ is $pqr$.
Indeed, $T(V(G_2) \setminus \{p\})$ is bipartite.
The triangle $pqr$ in $G_1$ has therefore to match $pqr$ in $G_2$.
Moreover, $r$ in $G_1$ has to match $r$ in $G_2$ since $p$ and $q$ have no edges besides the clique $pqr$.
The vertex $a$ in $G_1$ can only match a vertex $a_i$ for some $i \in \{1,\dots,t\}$.
Then, $e_1$ up to $e_{\binom{l}{2}}$ in $G_1$ has to match $\binom{l}{2}$ vertices in $\{e_{uv} $ $|$ $ 1 \leqslant u < v \leqslant n\}$ of $G_2$ which correspond to actual edges in $G^c_i$. 
Finally, $x_1$ up to $x_l$ in $G_1$ has to match $l$ vertices among the $x_j$'s in $G_2$.
Note that the number of edges in $E(G_1)_3$ between the $e_j$'s and the $x_j$'s is 
exactly $2\binom{l}{2}=l(l-1)$. 
More precisely, each $e_j$ touches $2$ edges in $E(G_1)_3$ and each $x_j$ touches $l-1$ edges in $E(G_1)_3$.
In order to get a match in $G_2$, one should find a set of $\binom{l}{2}$ edges inducing exactly $l$ vertices.
So, this set of $l$ vertices is a clique in $G^c_i$.
\end{proof}

Note that the parameter of MCIS in the previous reduction is exactly the size of $G_1$ and the graphs used in the proof are connected. 
Therefore, we have the following: 

\begin{corollary}
\isi and \mccis, parameterized by a bound on the minimum vertex covers of input graphs, do not have a polynomial-size kernel unless $\np \subseteq \mathsf{coNP}/poly$.
\end{corollary}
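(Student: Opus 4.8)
The plan is to reuse, essentially verbatim, the OR-cross-composition built in the proof of Theorem~\ref{th:no poly kernel} and to observe that it already certifies the claim for MCCIS. Recall that that reduction takes $t$ equivalent \clique instances $(G^c_1,l),\dots,(G^c_t,l)$ (all with $n=|V(G^c_i)|$ and the same $l$) and outputs an MCIS instance $(G_1,G_2,l',Z)$ with $l'=|V(G_1)|$ and $Z$ a vertex cover of $G_2$ of size $\binom n2+2$; moreover $G_1$ trivially admits a vertex cover of size at most $|V(G_1)|=\binom l2+l+4$, which is polynomial in $n$. Since a common induced subgraph of size $l'=|V(G_1)|$ is forced to be an induced copy of the whole of $G_1$ inside $G_2$, the proof of Theorem~\ref{th:no poly kernel} shows that the instance is a yes-instance iff some $G^c_i$ contains a $K_l$.

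First I would verify that both $G_1$ and $G_2$ are connected, so that the embedded copy of $G_1$ is automatically a \emph{connected} common induced subgraph. For the nontrivial range $l\geq 2$ (the cases $l\leq 1$ are solved directly and replaced by a fixed trivial instance): in $G_1$, the vertex $r$ dominates $\{p,q,a\}$, $a$ dominates all $e_i$, and each $v_j$ lies on $l-1\geq 1$ of the gadgets $e_i$; in $G_2$, $r$ dominates $\{p,q\}\cup\{a_i\mid 1\leq i\leq t\}$, the subgraph induced by $\{e_{uv}\}\cup\{v_w\}$ is the incidence graph of $K_n$ and hence connected, and it attaches to $r$ through any $a_i$ incident to an edge of $G^c_i$ (if no $G^c_i$ has any edge, then every instance is a no-instance and we output a fixed no-instance). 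Consequently the common subgraph exhibited in the $(\Leftarrow)$ direction of that proof, being isomorphic to $G_1$, is connected and thus a valid MCCIS solution; conversely any MCCIS solution of size $l'$ is in particular an MCIS solution of size $l'$, so the $(\Rightarrow)$ direction carries over unchanged. Hence $(G_1,G_2,l',Z)$ is a yes-instance of MCCIS iff some input \clique instance is a yes-instance.

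Then I would check the remaining bookkeeping required by the cross-composition framework: the polynomial equivalence relation $\R$ is the one already used for Theorem~\ref{th:no poly kernel} (malformed strings in one class; well-formed instances grouped by the pair $(n,l)$), and it is polynomial for the same reasons; the output is produced in time polynomial in $\sum_i|x_i|$; and the new parameter $k^*:=vc(G_1)+|Z|$ is bounded by a polynomial in $n\leq\max_i|x_i|$, independent of $t$. Since \clique is \np-hard under Karp reductions, the proposition of Bodlaender et al.~\cite{Bodlaender2014} then yields that MCCIS parameterized by the sum of the vertex cover numbers of the two input graphs admits no polynomial kernel unless $\np\subseteq\mathsf{coNP}/poly$.

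There is no deep obstacle here; the only point needing care is the connectivity verification together with the handling of degenerate \clique instances (edgeless graphs and $l\leq 1$), which is routine. Everything else — the equivalence relation, the polynomial bound on the parameter, and the forward and backward correctness of the embedding — is inherited directly from the proof of Theorem~\ref{th:no poly kernel}.
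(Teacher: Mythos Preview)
Your proposal is correct and takes essentially the same approach as the paper, which simply observes that the graphs $G_1$ and $G_2$ built in the cross-composition of Theorem~\ref{th:no poly kernel} are connected, so the same reduction serves for MCCIS. You are more careful than the paper in explicitly verifying connectivity (in particular noting the edge case where every $G^c_i$ is edgeless, which the paper glosses over) and in restating the cross-composition bookkeeping, but the underlying argument is identical.
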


The algorithm of~\cite{AbuKhzam2014} is not single-exponential for parameter sum of the vertex cover numbers. 
In fact, we show that a single-exponential algorithm is very unlikely.
This is, to the best of our knowledge, the first result of this type for parameter vertex cover.

\begin{theorem}
Under the ETH, IS(C)I cannot be solved in time $2^{o(k \log k)}$ when parameter $k$ is the sum of the vertex cover number of both graphs.
\end{theorem}

\begin{proof}
We give a reduction from $k \times k$ \textsc{Permutation Clique} which linearly preserves the parameter $k$.
It is known that this problem does not admit an algorithm with running time $2^{o(k \log k)}$ unless the ETH fails \cite{Lokshtanov2011}.
In the $k \times k$ \textsc{Permutation Clique} problem, one is given a graph over the set of vertices $[k] \times [k]$ and the goal is to find a clique of size $k$ such that in each \emph{row} and in each \emph{column} exactly one vertex is part of the clique, where a \emph{row} is the set of vertices $\{(i,1),(i,2),\ldots,(i,k)\}$ for some $i \in [k]$, and a column is the set of vertices $\{(1,j),(2,j),\ldots,(k,j)\}$ for some $j \in [k]$.

We first describe how the graph $G_2$ is built from any instance $G=([k] \times [k],E)$ of $k \times k$ \textsc{Permutation Clique}.
For each row (resp. column) index $i \in [k]$, we add two vertices $r_i^1$ and $r_i^2$ (resp. $c_i^1$ and $c_i^2$) that we link by an edge.
For $j \in [2]$, we set $R_j=\{r_1^j,r_2^j,\ldots,r_k^j\}$ (resp. $C_j=\{c_1^j,c_2^j,\ldots,c_k^j\}$) and $R=R_1 \cup R_2$ (resp. $C=C_1 \cup C_2$).
Then, to distinguish row indices from column indices, we add a clique $D_{r}$ of size $6$, and we link one designated vertex $r$ of $D_r$ to all the vertices in $R$.
We also add a clique $D$ of size $5$ with a special vertex $v$ in $D$ such that $v$ is linked to all the vertices in $R_1 \cup C_1$.

Finally, for each edge $e=(i,j)(i',j')$ of $G$ with $i \neq i' $ and $j \neq j'$ \footnote{We ignore the other edges since they are not relevant in finding a permutation clique.}, we add a vertex $v(e,1)$ that we link to the four vertices $r_i^1$, $c_j^1$, $r_{i'}^2$, and $c_{j'}^2$, and a vertex $v(e,2)$ that we link to the four vertices $r_i^2$, $c_j^2$, $r_{i'}^1$, and $c_{j'}^1$.
This ends the construction of $G_2$ (see Figure~\ref{fig:overall-vc}).
The pattern $G_1$ depends only on $k$ and is defined as the graph one obtains following the above construction when $G$ have all the edges of the form $(i,i)(i',i')$ and no other edges (in other words, $G$ has a $k$-clique on the diagonal and nothing else). 

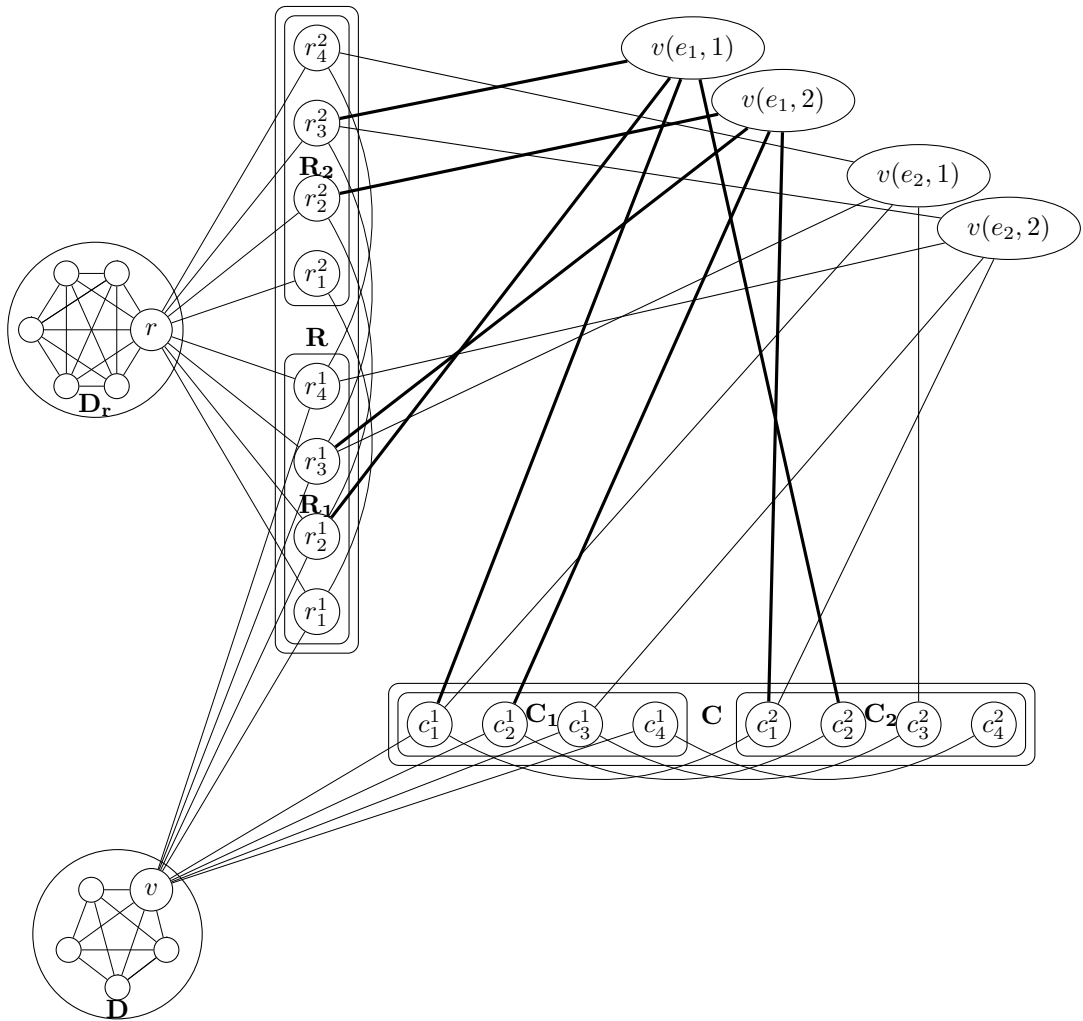
\begin{figure}
\centering
\begin{tikzpicture}[]
\def\k{4}
\node[draw,circle] (r) at (-2.2,\k+1.25) {$r$} ;
\node[draw,circle] (dr1) at (-3.8,\k+1.25) {} ;
\node[draw,circle] (dr2) at (-2.66,\k+2) {} ;
\node[draw,circle] (dr3) at (-3.33,\k+2) {} ;
\node[draw,circle] (dr4) at (-2.66,\k+0.5) {} ;
\node[draw,circle] (dr5) at (-3.33,\k+0.5) {} ;
\foreach \i [count=\s from 0] in {1,...,5}{
  \foreach \j in {1,2,...,\s}{
    \draw[very thin] (dr\i) -- (dr\j) ;
}
\draw[very thin] (r) -- (dr\i) ;
}
\node[draw,circle,fit=(r) (dr1) (dr2) (dr3) (dr4) (dr5),inner sep=-0.15cm] (Dr) {} ;
\node[below of=Dr] {$\mathbf{D_r}$} ;
\node[draw,circle] (v) at (-2.2,-2.2) {$v$} ;
\node[draw,circle] (d1) at (-2,-3) {} ;
\node[draw,circle] (d2) at (-2.65,-3.5) {} ;
\node[draw,circle] (d3) at (-3.3,-3) {} ;
\node[draw,circle] (d4) at (-3,-2.2) {} ;
\foreach \i [count=\s from 0] in {1,...,4}{
  \foreach \j in {1,2,...,\s}{
    \draw[very thin] (d\i) -- (d\j) ;
}
\draw[very thin] (v) -- (d\i) ;
}
\node[draw,circle,fit=(v) (d1) (d2) (d3) (d4),inner sep=-0.05cm] (D) {} ;
\node[below of=D] {$\mathbf{D}$} ;
\foreach \i in {1,...,\k}{
\foreach \j in {1,2}{
\node[draw,circle,inner sep=0.04cm] (r\i\j) at (0,\i+\k*\j-\k+0.5*\j) {$r^\j_\i$} ;
\draw[very thin] (r) -- (r\i\j) ;
}
\draw[very thin] (r\i1) to [bend right] (r\i2) ;
}
\foreach \j in {1,2}{
\node[draw,rectangle,rounded corners,fit=(r1\j) (r\k\j)] (R\j) {\\$\mathbf{R_\j}$} ;
}
\node[draw,rectangle,rounded corners,fit=(R1) (R2)] (R) {\\$\mathbf{R}$} ;
\foreach \i in {1,...,\k}{
\draw[very thin] (v) -- (r\i1) ;
}
\foreach \i in {1,...,\k}{
\foreach \j in {1,2}{
\node[draw,circle,inner sep=0.04cm] (c\i\j) at (\i+\k*\j-\k+0.5*\j,0) {$c^\j_\i$} ;
}
\draw[very thin] (c\i1) to [bend right] (c\i2) ;
}
\foreach \j in {1,2}{
\node[draw,rectangle,rounded corners,fit=(c1\j) (c\k\j)] (C\j) {$\mathbf{C_\j}$} ;
}
\node[draw,rectangle,rounded corners,fit=(C1) (C2)] (C) {$\mathbf{C}$} ;
\foreach \i in {1,...,\k}{
\draw[very thin] (v) -- (c\i1) ;
}
\node[draw,ellipse] (ve11) at (5,9) {$v(e_1,1)$} ;
\draw[very thick] (r21) -- (ve11) -- (c11) ;
\draw[very thick] (r32) -- (ve11) -- (c22) ;
\node[draw,ellipse] at (6.2,8.3) (ve12) {$v(e_1,2)$} ;
\draw[very thick] (r22) -- (ve12) -- (c12) ;
\draw[very thick] (r31) -- (ve12) -- (c21) ;

\begin{scope}[xshift=3cm,yshift=-1.7cm]
\node[draw,ellipse] (ve21) at (5,9) {$v(e_2,1)$} ;
\draw (r31) -- (ve21) -- (c11) ;
\draw (r42) -- (ve21) -- (c32) ;
\node[draw,ellipse] at (6.2,8.3) (ve22) {$v(e_2,2)$} ;
\draw (r32) -- (ve22) -- (c12) ;
\draw (r41) -- (ve22) -- (c31) ;
\end{scope}
\end{tikzpicture}
\caption{The overall construction of $G_2$. We represented only two edges of $G$: $e_1=(2,1)(3,2)$ and $e_2=(3,1)(4,3)$. For the sake of readability, the edges encoding $e_1$ are enhanced to distinguish them easily from the edges encoding $e_2$.}
\label{fig:overall-vc}
\end{figure}

Both $G_1$ and $G_2$ have $R \cup C \cup D_r \cup D$ as a vertex cover of size 
$4k+11$. $G_2$ has $|E|+4k+11=O(k^4)$ vertices and $G_1$ has 
$2{k \choose 2}+4k+11=O(k^2)$ vertices.
To avoid confusion about vertices in $G_1$ and $G_2$ we will denote the 
vertices and sets of vertices of $G_1$ with a tilde.
We now show that the reduction is valid.

Suppose there is a solution $\{(a_1,b_1), \ldots, (a_k,b_k)\}$ to the instance of $k \times k$ \textsc{Permutation Clique}.
Then, $G_1$ is an induced subgraph of $G_2$ with the following mapping.
We map $\tilde{r}$ to $r$ and $\tilde{v}$ to $v$.
We map $\tilde{D_r} \setminus \{\tilde{r}\}$ to $D_r \setminus \{r\}$ and $\tilde{D} \setminus \{\tilde{v}\}$ to $D \setminus \{v\}$ in an arbitrary way.
Then, for each $i \in [k]$ and $j \in [2]$, we map $\tilde{r}_i^j$ to $r_{a_i}^j$ and $\tilde{c}_i^j$ to $r_{b_i}^j$.
We observe that this mapping is one-to-one since $(a_1,b_1), \ldots, (a_k,b_k)$ is a \emph{permutation} clique, i.e., $\{a_1,a_2,\ldots,a_k\}=[k]=\{b_1,b_2,\ldots,b_k\}$.
Finally, for any $j \in [2]$, and any $i \neq i' \in [k]$ we map $\tilde{v}(e,j)$ to $v((a_i,b_i)(a_{i'},b_{i'}),j)$.
Note that vertex $v((a_i,b_i)(a_{i'},b_{i'}),j)$ always exists precisely because $\{(a_1,b_1), \ldots, (a_k,b_k)\}$ is a clique.

Conversely, if there is a solution to the IS(C)I instance, we will show that there is a permutation $k$-clique in $G$.
There is only one clique of size $6$ in $G_2$, so the clique $\tilde{D_r}$ of size $6$ has to be mapped to $D_r$.
Then, $\tilde{r}$, as the unique vertex of $\tilde{D_r}$ of degree larger than $5$, should be mapped to $r$.
Now, for the same reasons, $\tilde{D}$ should be mapped to $D$ and $\tilde{v}$ to $v$.
Vertices of $\tilde{R_1} \cup \tilde{C_1}$ are the only $2k$ unmatched vertices having $\tilde{v}$ as a neighbor, so those vertices should be matched to the only $2k$ unmatched vertices having $v$ as a neighbor, namely $R_1 \cup C_1$.
For similar reasons, $\tilde{R}$ should be mapped to $R$.
Now, $\tilde{R_2} \cup \tilde{C_2}$ can only be mapped to $R_2 \cup C_2$ as the only unmatched vertices having exactly one neighbor in $\tilde{R_1} \cup \tilde{C_1}$ ($R_1 \cup C_1$).

Thus, the $4k$ vertices of $\tilde{R} \cup \tilde{C}$ can only be mapped to $R \cup C$, such that for $j \in [2]$, $\tilde{R_j}$ is mapped to $R_j$ and $\tilde{C_j}$ is mapped to $C_j$.
The edges $\tilde{r}_i^1\tilde{r}_i^2$ and $r_i^1r_i^2$ (resp. $\tilde{c}_i^1\tilde{c}_i^2$ and $c_1^1c_i^2$) further constrains the mapping: if $\tilde{r}_i^1$ is mapped to $r_{i'}^1$ then $\tilde{r}_i^2$ has to be mapped to $r_{i'}^2$ (resp. if $\tilde{c}_i^1$ is mapped to $c_{i'}^1$ then $\tilde{c}_i^2$ has to be mapped to $c_{i'}^2$).
Hence, we can see the mapping from $\tilde{R} \cup \tilde{C}$ to $R \cup C$ as two permutations $\sigma_r$ and $\sigma_c$ on $k$ elements, such that for $j \in [2]$, for $i \in [k]$, $\tilde{r}_i^j$ is mapped to $r_{\sigma_r(i)}^j$ and $\tilde{c}_i^j$ is mapped to $c_{\sigma_c(i)}^j$.
Then, the current partial mapping can be extended to a solution only if $\{(\sigma_r(1),\sigma_c(1)),\ldots, (\sigma_r(k),\sigma_c(k))\}$ is a clique in $G$.
Indeed, $\forall j \in [2]$,  $\forall i \neq i' \in [k]$, $\tilde{v}((i,i)(i',i'),j)$ can only be mapped to a potential $v((\sigma_r(i),\sigma_c(i))(\sigma_r(i'),\sigma_c(i')),j)$ so that vertex has to exist, meaning that there should be an edge in $G$ between $(\sigma_r(i),\sigma_c(i))$ and $(\sigma_r(i'),\sigma_c(i'))$.

An algorithm solving IS(C)I in time $poly(|G_1|,|G_2|)2^{o(k \log k)}$ with $k := \text{vc}(G_1)+\text{vc}(G_2)$ would therefore translate into an algorithm running in time $2^{o(k \log k)}$ for $k \times k$ \textsc{Permutation Clique} and contradict the ETH.

\end{proof}

Despite the fact that ISI and MCIS have the same parameterized complexity with respect to the natural parameter, they exhibit different behaviors when considering structural parameters.
In fact, the latter is paraNP-hard when parameterized by the vertex cover of only one of the two graphs, whereas ISI is $\fpt$ when parameterized by the vertex cover of the second (host) graph.
To see this, note that when the host graph has a vertex cover of size $k$, the minimum size of a vertex cover in the pattern graph must be bounded by the parameter $k$; otherwise we have a NO-instance.
The claim follows from the fixed-parameter tractability of MCIS in this case~\cite{AbuKhzam2014}.  

%
%



\medskip




Given the negative result of Theorem~\ref{th:fvsone}, the next question we pose is whether MCCIS is in $\fpt$ with respect to the size of a minimum vertex cover. 
In \cite{AbuKhzam2014}, a parameterized algorithm is presented for MCIS when the parameter is a bound on the minimum vertex cover number of the input graphs.
However, that algorithm cannot help us much for solving MCCIS since it relies on the existence of a feasible solution of size at least $\approx n-k$ which consists of mapping the two \emph{big} independent sets of the two graphs onto each other.
Of course, this is not a feasible solution for MCCIS.
In the following we prove that MCCIS is fixed-parameter tractable w.r.t. $k := \text{vc}(G_1)+\text{vc}(G_2)$.


\begin{theorem}
\label{mccis_vc}
\mccis parameterized by $k := \text{vc}(G_1)+\text{vc}(G_2)$ is fixed-parameter tractable.
\end{theorem}

\begin{proof}
In time $O^*(2^k)$ (even $O^*(1.2738^k)$ \cite{CKX10}), we can find minimum vertex covers $C_1$ and $C_2$ in $G_1$ and $G_2$ respectively.
Let $I^{(j)}$ be the independent set $V(G_j) \setminus C_j$ for $j \in \{1,2\}$.
By assumption, our parameter $k$ is $\max(C_1,C_2)$, so we can enumerate all tripartitions of $C_1$ and $C_2$ in time $O^*(9^k)$.
We denote by $C_{1,m}$, $C_{1,u}$ and $C_{1,i}$ (respectively $C_{2,m}$, $C_{2,u}$ and $C_{2,i}$) the three sets of a tripartition of $C_1$ (respectively $C_2$).
For $j \in \{1,2\}$, $C_{j,u}$ corresponds to the vertices of $C_j$ that are not matched, so they may be deleted.
$C_{j,m}$ comprises the vertices matched to $C_{3-j,m}$ (that is, to the vertex cover of the other graph), and $C_{j,i}$ are the vertices matched to $I^{(3-j)}$, the independent set of the other graph. See Figure~\ref{vcvc}.

We observe that for $j \in \{1,2\}$, $I^{(j)}$ can be partitioned into at most $2^k$ classes of twins: $I^{(j)}_{1}, I^{(j)}_2, \ldots I^{(j)}_{2^k}$. A class of twins in this context is a set of vertices with an identical neighborhood in the vertex cover and there are at most $2^k$ subsets of $C_j$. Potentially, some classes can be empty: they correspond to a subset of the vertex cover $C_j$ that is not the (exact) neighborhood of any vertex in $I^{(j)}$.

At this point, we can enumerate the mappings between $C_{1,m}$ and $C_{2,m}$ in time $O^*(k^k)$ and the mappings between $C_{j,i}$ and $I^{(3-j)}$ in time $O^*((2^{k})^k)=O^*(2^{k^2})$.
Indeed, to match a vertex $u$ with a vertex $v$ or a twin of $v$ is equivalent.
Thus, in time $O^*((9k)^k 2^{k^2})$ we can enumerate all the solutions of MCIS where only vertices of $I^{(1)}$ could still be matched to vertices of $I^{(2)}$.
The optimal map of the independent sets can be done in polynomial time by matching the greatest number of vertices in each \emph{equivalent} twin class (which is the size of the smaller of the two equivalent twin classes), where a twin class $I^{(j)}_r$ in $I^{(j)}$ is equivalent to a twin class $I^{(3-j)}_s$ in $I^{(3-j)}$ if the vertices of $N(I^{(j)}_r) \setminus C_{j,u}$ and $N(I^{(3-j)}_s) \setminus C_{3-j,u}$ are in one-to-one correspondence.
\end{proof}

\begin{figure}[htb!]
\begin{center}
\begin{tikzpicture}

\draw[color=black,fill=white] (0,0) rectangle (3,4);
\draw (0,1.5) -- (3,1.5);
\draw[decorate,decoration=brace] (-0.2,0) -- (-0.2,1.45) node[midway,anchor=east,inner sep=3pt, outer sep=3pt]  {$C_1$};
\draw[decorate,decoration=brace] (-0.2,1.55) -- (-0.2,4) node[midway,anchor=east,inner sep=3pt, outer sep=3pt]  {$I^{(1)}=G_1[V_1 \setminus C_1]$};
\draw (0,0.5) -- (3,0.5);
\draw (0,1) -- (3,1);
\node() at (1.5, 0.25) {$C_{1,u}$};
\node() at (1.5, 0.75) {$C_{1,m}$};
\node() at (1.5, 1.25) {$C_{1,i}$};

\node () at (1.5,-0.5) {$G_1$};

\node[draw,circle] (i11) at (0.75,3.5) {};
\node[draw,circle] (i12) at (0.45,2.5) {};
\node[draw,circle] (i13) at (0.753,2) {};
\node[draw,rectangle,rounded corners,dashed,fit=(i11) (i12) (i13)] (i1) {$I^{(1)}_1$};

\node (d) at (1.5,3) {$\ldots$};

\node[draw,circle] (i2k1) at (2.3,3.5) {};
\node[draw,circle] (i2k2) at (2.5,2.2) {};
\node[draw,rectangle,rounded corners,dashed,fit=(i2k1) (i2k2)] (i2k) {$I_{2^k}^{(1)}$};

\node[draw,circle] (v1) at (0.3,1.25) {};
\node[draw,circle] (v2) at (2.2,1.25) {};
\node[draw,circle] (v2b) at (2.6,1.25) {};

\draw (v1) edge[bend left=30] (i11) ;
\draw (v1) -- (i12) ;
\draw (v1) -- (i13) ;

\draw (i2k1) edge[bend right=25] (v2) ;
\draw (i2k2) -- (v2) ;
\draw (i2k1) edge[bend left=30] (v2b) ;
\draw (i2k2) -- (v2b) ;

\begin{scope}[xshift=1cm]
\draw[color=black,fill=white] (3,0) rectangle (6,4);
\draw[decorate,decoration=brace] (6.2,1.45) -- (6.2,0) node[midway,anchor=west,inner sep=3pt, outer sep=3pt]  {$C_2$};
\draw[decorate,decoration=brace] (6.2,4) -- (6.2,1.55) node[midway,anchor=west,inner sep=3pt, outer sep=3pt]  {$I^{(2)}=G_2[V_2 \setminus C_2]$};
\draw (3,1.5) -- (6,1.5);
\draw (3,0.5) -- (6,0.5);
\draw (3,1) -- (6,1);
\node() at (4.5, 0.25) {$C_{2,u}$};
\node() at (4.5, 0.75) {$C_{2,m}$};
\node() at (4.5, 1.25) {$C_{2,i}$};
\node () at (4.5,-0.5) {$G_2$};

\node[draw,circle] (i11) at (3.5,3.5) {};
\node[draw,circle] (i13) at (3.753,2) {};
\node[draw,rectangle,rounded corners,dashed,fit=(i11)  (i13)] (i1) {$I^{(2)}_1$};

\node (d) at (4.5,3) {$\ldots$};

\node[draw,circle] (i2k1) at (5.4,3.5) {};
\node[draw,circle] (i2k2) at (5.6,2.2) {};
\node[draw,circle] (i2k3) at (5.2,2.2) {};
\node[draw,rectangle,rounded corners,dashed,fit=(i2k1) (i2k2) (i2k3)] (i2k) {$I_{2^k}^{(2)}$};

\node[draw,circle] (v1) at (3.4,1.25) {};
\node[draw,circle] (v1b) at (3.8,1.25) {};
\node[draw,circle] (v2) at (5.2,1.25) {};

\draw (v1) edge[bend left=15] (i11) ;
\draw (v1) -- (i13) ;
\draw (v1b) edge[bend right=30] (i11) ;
\draw (v1b) -- (i13) ;

\draw (i2k1) edge[bend right=30] (v2) ;
\draw (i2k2) -- (v2) ;
\draw (i2k3) -- (v2) ;
\draw (i2k1) -- (v1b) ;
\draw (i2k2) -- (v1b) ;
\draw (i2k3) -- (v1b) ;

\end{scope}

\draw (2.8,0.75) edge[<->] (4.2,0.75);
\draw (2.8,1.25) edge[<->] (4.2,2.75);
\draw (2.8,2.75) edge[<->] (4.2,1.25);
\draw (2.8,3) edge[<->] (4.2,3);

\end{tikzpicture}

\end{center}


\caption{Illustration of the proof of Theorem \ref{mccis_vc}. Dashed boxes represent the classes inside the independent set. Arrows represent the matching between sets of vertices. Sets $C_1$ (resp. $C_2$) represents a vertex cover for $G_1$ (resp. $G_2$).}
\label{vcvc}
\end{figure}
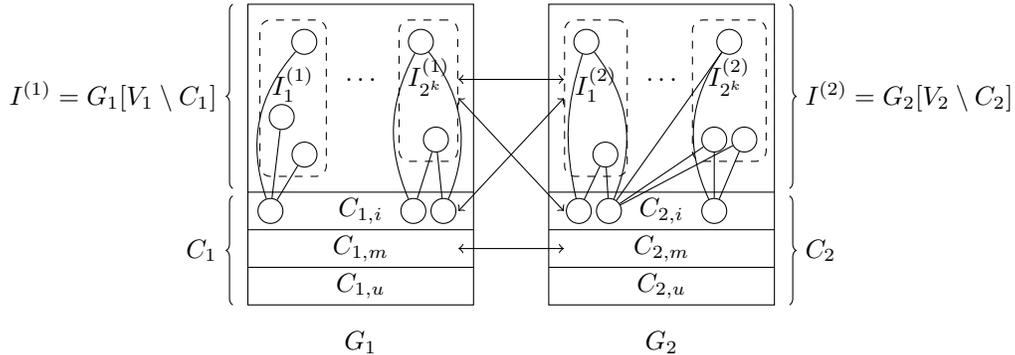


To find a solution for MCCIS, the algorithm described in the above proof enumerates all possible maximal common induced subgraphs in time $O^*((9k)^k 2^{k^2})$. 
The current bottleneck to improve it is when we try to match vertices of the vertex cover with vertices of the independent set.
For the not connected version of the problem, a trivial argument can bound the size of the independent set (if this one is big, there is a trivial solution), which cannot be used for the connected version.
As such, it can be used as an enumeration algorithm for MCIS.


\begin{corollary}
\mcis parameterized by $k := \text{vc}(G_1)+\text{vc}(G_2)$ is fixed-parameter enumerable.
\end{corollary}



Let us finish this section with some general considerations.
Note that for ISI, the parameter $\text{vc}+\text{fvs}$ is not the same 
as $\text{fvs}+\text{vc}$. 
In the latter, the parameter is a bound on the vertex cover of $G_2$ (as well as the feedback vertex set of $G_1$) which makes ISI in $\fpt$, while it remains open for $\text{vc}+\text{fvs}$. 
We also note that ISI is not in $\xp$ w.r.t. $\text{vc}(G_1)$ by a simple reduction from \textsc{Independent Set}: let $G_2$ be an edgeless graph on $k$ vertices, then its vertex cover number is 0.

We now turn our attention to the case where MCIS is parameterized by a 
combination of the natural parameter and some structural parameter. We note 
that, in general, such parameterization reduces the problem's complexity. This 
is most often due to the fixed-parameter tractability of MCIS in $H$-minor 
free graphs (again, since ISI is $\fpt$ in this case~\cite{FG01}).
For example, consider the case where the parameter is the sum of some bound $t$ 
on the feedback vertex set of the input graphs and the natural parameter $k$. 
The problem is $\fpt$ in this case since graphs of $t$-feedback vertex set are 
$H$-minor free where $H$ is the ``fixed'' graph consisting of a disjoint union 
of $t+1$ triangles. The same applies to parameterization by treewidth and the 
natural parameter by considering $H$ to be the complete graph on $t+2$ 
vertices, for example.


\section{Conclusion}

We studied the \mcis and \mccis problems with respect to the solution size on special graph classes such as forests, bipartite graphs, bounded degree graphs, bounded degeneracy graphs, graphs without small (length 3 or 4) cycles. 
The two problems are fixed-parameter tractable on $H$-minor free graphs, which include forests, and bounded degree graphs, but they are $\wone$-complete on bipartite graphs of girth 6 and degeneracy 2.
This ruled out at the same time two approaches to get fixed-parameter algorithms on subclasses of graphs for W-hard problems.

We then considered the use of structural parameters, such as a bound on the minimum vertex covers of the input graphs. 
Although both MCIS and MCCIS are in $\fpt$ in this case, we proved that no kernel of polynomial bound can be obtained unless $\np \subseteq \mathsf{coNP}/poly$ and that they cannot be solved in time $2^{o(k \log k)}$ under the ETH.
We noted that MCIS is not even in $\xp$ with respect to other (smaller) auxiliary parameters, such as treewidth and feedback vertex set. 
A few open problems remain to be addressed. 
For example, is MCIS/MCCIS in $\fpt$ when parameterized by the combination of the vertex cover number and the feedback vertex set number, or by the vertex cover number and the treewidth?
Moreover, it would be interesting to know whether the algorithm for MCCIS of Theorem~\ref{mccis_vc} can be improved to match the lower bound.




\bibliographystyle{abbrv}

\bibliography{mcis}

\end{document}